\newtheorem{theorem}{Theorem}[section]
\newtheorem{definition}{Definition}[section]
\newtheorem{proposition}{Proposition}[section]
\newtheorem{remark}{Remark}[section]
\begin{document}
\title{\bf Encryption Based on Conference Matrix}

\author {Shipra Kumari\thanks {E-mail: shipracuj@gmail.com} ~and~ Hrishikesh Mahato\thanks{Corresponding author: E-mail: hrishikesh.mahato@cuj.ac.in }\\
	\small Department of Mathematics,
	\small Central University of Jharkhand,
	\small Ranchi-835205, India}
\date{}
\maketitle \setlength{\parskip}{.11 in}
\setlength{\baselineskip}{15pt}
\maketitle
\begin{abstract}
In this article, an encryption scheme based on $(-1,1)$ conference matrix has been developed. The decryption key comprising of fixed number of positive integers with prime power yields the high level security of message. Some popular attacks has been discussed in the context of cryptoanalysis and  observed that it is robust against the popular cipher attack and the security of the information does not compromise.		
\end{abstract}
\textbf{Keywords}: Conference matrix, Cryptography, Cipher attack, Hill ciphers \\
\textbf{AMS Subject classification}: 05B20, 94A60, 68P25 \\
\hrule

\section{Introduction:}
Cryptography is the study of techniques of secured communications i.e. a study of techniques which ensure that communicated information cannot be understood by anyone except the intended receiver.\\
% There are several cryptographic algorithms to encrypt and decrypt the plain text for secure communication.\\ 
%The sender of encrypted text shares the decoding technique with intended receiver.\\
In $1929$ Lester S. Hill introduced the Hill cipher in which an invertible matrix is used as a private key  and inverse of that matrix is used to decrypt the message.\\
In this article we propose a private symmetric key encryption scheme based on conference matrices of order $n$ such that $n= p^r(p+2)^{r'}$ : where $r=1, r'=1$ or $r\geq 1, r'=0$, and $p, \ p+2$  are any odd primes. 
\\
%This is the basic idea for the proposed encryption scheme.
However in this encryption scheme a modular based message has been encrypted by a conference matrix of order $n$ and decrypted  by its transpose, the identity matrix $I_{n}$ and $J_{n}$ (a square matrix of order $n$  with all entries $1$). It is not easy to find the conference matrix of order $n$ unless such $n$ is given. Subsequently it may raise the difficulties for the intruder by involvement of a positive integer modular base and order of conference matrix $n$ formed by a large prime.  \\  
C. Koukouvinos and D.E. Simos \cite{koukouencrypt} also developed an encryption scheme using circulant Hadamard core  in which the first row of the Hadamard core required to be transmitted as a private key to the intended receiver.\\
 But this scheme requires transmission of numbers $(p,r,r',d,q)$ as a private  key, where $n=p^r(p
 +2)^{r'}$:  either $r= 1, r'= 1$ or $r \geq 1, r'=0$
  is the size of conference matrix, $d$ is any constant and $q$ a positive integer modular base. In addition to these  numbers the primitive polynomial $P(n)$ of $GF(p^r)$ which has been used to construct the conference matrix of order $n$ is required to transmit in case of $n=p^r, r>1$ \cite{Pal}.  The involvement of the transpose of conference matrix and two standard matrices make easy the construction of the decryption key for intended receiver.\\
 % But it is quite difficult to guess the private key for intruder for large value of $n$ as difficulty of prime factorization of $n$. \\
 The main goal of the proposed technique includes the following:
\begin{enumerate}
	\item Require the private key which is shared by the sender and receiver only once.
	\item Easy transmission of private key
	\item  Computation of encryption and decryption are fast.
	\item Difficult to guess the key for intruder.
	\item Robust to cryptographic attack.
\end{enumerate}
The structure of this paper is as follows. 
In section $2$, the required definitions and information has been given as preliminaries. In section $3$ we have discussed the encryption/decryption algorithms and its mechanism which ensure that the encrypted message is determined uniquely. Furthermore in section $4$ we have done security analysis of the scheme with different cryptographic attack. In section $5$ we have given an example
 and finally in section $6$ we have concluded the paper explaining its limitations and benefits.
\section{Preliminaries}
In this section some basic terminologies are defined which has been used to design the cryptographic algorithm.

\begin{definition}\textbf{Encryption scheme} \cite{koukouencrypt}\\ 	
An Encryption is the process in which we encode a message or information in such a manner so that only intended person can access it. There are three sets in the encryption scheme: a message set or plaintext $M$, a ciphertext (encrypted message) $C$, and a key set $K$  together with the following three algorithms.
\begin{enumerate}
	\item  A key set $K$ which generates the valid encryption key $k \in K$ and a valid key $k^{-1} \in K$ to decrypt the message.
	\item An encryption algorithm in which message $m \in M$ and key $k \in K$ together produce an element $c \in C$ which is defined as $c=E_{k}(m)$.
	\item A decryption algorithm in which an element $c \in C$ with decryption key $k^{-1}\in K$  return back an element of message $m \in M$ with $m=D_{k^{-1}}(c)$. 
\end{enumerate}
Note that $D_{k^{-1}}(E_{k}(m))=m$.
\end{definition}
\begin{definition} \textbf{$\mathcal{O}$-notation}  \\
This notation is used to describe the complexity or performance of an algorithm. Basically "big $\mathcal{O}$"  defines an upper bound of an algorithm.	Formally, If $f(n)$ and $g(n)$ are two functions, we denote $\mathcal{O}(g(n))$ the set of functions and defined as\\
$\mathcal{O}(g(n))=\{f(n)$ : there exist positive constant $c$ and $n_{0}$ such that $0\leq f(n) \leq cg(n)$ for all $n\geq n_{0}$\}
	\end{definition}

\begin{definition} \textbf{Conference matrix} \label{def:conference}\\
	A $(-1,1)$ square matrix $A$ of order $n$ is known as conference matrix if
	\\ $$A^TA=AA^T=(n+1)I_{n}-J_{n}$$
		
\end{definition}

\subsection{Existence of Conference matrix}\label{construct:paley}
There are some methods to construct Hadamard matrices of order $n+1$ which ensure the existence of conference matrix of order $n$.
\begin{enumerate} 
	\item If $n=p^r$, where $n \equiv 3(mod \ 4)$,  $p$ is an odd prime and $r$ is a positive integer. Then using Paley construction we can get conference matrix of order $n$ \cite{Pal}.
	
	\item $n=p(p+2)$, where $p$ and $p+2$ are prime then using difference set there exist a conference matrix of order $n$. \cite{mar}
	%	\item $p=2^t-1$, where $t$ is a positive integer.\cite{singer}.
\end{enumerate}

The following remarks determine $A^T$ directly without formation of $A$
\begin{remark}\label{construct:p^r}
	Let $n=p^r$, where $p$ is an odd prime, $r \geq 1$ and $n\equiv 3(mod \ 4)$ \cite{Pal}. If the leading elements of conference matrix $A$ are  $a_{0}, a_{1}, \cdots, a_{n-1}$ 
	and 	\begin{equation}
	A:(i,j)=\chi(a_{j}-a_{i})
	\end{equation}
	then
	\begin{equation}
	A^T:(i,j)=\chi(a_{i}-a_{j})
	\end{equation}
	and vice-versa.\\
	However for $r=1$ the leading elements of $A$ are $0,1,2, \cdots n-1$ and $A$ is circulant. For $r >1$ the leading elements of $A$ are $\lambda^{0}, \lambda^{1}, \cdots, \lambda^{n-1}$ where $\lambda $ is a  root of primitive polynomial $P(n)$ of $GF(p^r)$.
\end{remark}
\begin{remark}\label{construct:pq}
	Let $n=p(p+2)$, where $p,p+2$ are prime. The  leading elements of $A$ are $0,1, \cdots n-1$ and the first row of the circulant matrix $A=(a_{i,j})$ is determined  by difference set \cite{mar} then 
	\begin{equation}
	A^T= (a_{i,(n-j)(mod \ n)})
	\end{equation}
	
\end{remark}
It can be noted that if we fix with prescribed leading elements of $A$, its transpose $A^T$ may be obtained directly.
\section{Results}
\subsection{Design of Cryptographic Algorithm}

Let there are $q$ distinct characters in the language in which the message or information is written. We convert the message to be transmitted into its corresponding numeric plain text (ASCII code) in modulo $q$. In order to block cipher we divide the plain text into blocks of each size $n$ and each block represented as a column vector. We add  "space" in the last block to make it of size $n$ if needed. \\
 The encrypted message to be transmitted over a communication channel of a column vector $M$  is
\begin{equation}\label{encrypt:C}
C \equiv(AM+de_{n})(mod \ q)
\end{equation}
where $d$ is any constant, $e_{n} =(1,1,\cdots,1)^T$ , $A$ is a conference matrix of order $n$ and $q$ is a positive integer modular base with $gcd(n+1, q)=1$. In context of construction of $A$, the leading elements of $A$ are $0,1,2, \cdots, n-1$ for $n=p$ or $n=p(p+2)$ and those of $A$ are $\lambda^{0}, \lambda^{1}, \cdots , \lambda^{n-1}$ for $n=p^r, r>1$ where $\lambda $ is a primitive root of $GF(p^r)$.
According to Hill cipher it requires $A^{-1}$ to decrypt the message. However in this scheme $C-de_{n}$ is pre-multiplied by $A^T$ by the intended receiver to disposed off the calculation of $A^{-1}$. Now to get the original message receiver has to decrypt the message using the transformation
\begin{equation} \label{decrypt:M}
M\equiv  (A^TA)^{-1}A^T(C-de_{n})(mod \ q)
\end{equation}
%The matrix key $A$ which is used in encryption scheme is the most vital component.
 
which requires to calculate the  $(A^TA)^{-1}$ with modular base $q$ and $A^T$. In general it is quite difficult to calculate $(A^TA)^{-1}$ for large value of $n$. 
  But, since $A$ is a conference matrix of order $n$, we have 
  \begin{equation*}
  A^TA= (n+1)I_{n}-J_{n}
  \end{equation*}
 so
  \begin{equation}
  (A^TA)^{-1}= \frac{1}{n+1}(I_{n}+J_{n})
  \end{equation} 
% \textbf{Note:} It is noted that gcd$(p+1,q)=1$. otherwise $(A^TA)^{-1}$ doesn't exist.
  \\
  and equation \eqref{decrypt:M} reduced to 
  \begin{equation}
  M \equiv \frac{1}{n+1}(I_{n}+J_{n})A^T(C-de_{n})(mod \ q)
  \end{equation}
 which is quite easy to form the decryption key for intended receiver as $n$ is known.
  %In this encryption scheme $(-1,1)$ conference matrix is used as a key matrix.
   %Also to raise the difficulties for intruder modular base $q$  is used in encryption and decryption.\\
%Here order of a conference matrix $A$ is $p$, a prime, of the from of $4k+3$. By Paley construction conference matrix $A$ can be constructed.\\
%\textbf{Note:} If $A$ is a circulant matrix with first row $[a_{(1,1)},a_{(1,2)}, \cdots , a_{(1,p-1)}, a_{(1,p)}]$ then first row of $A^T=[a_{(1,i)},]$, are \begin{equation}a_{(1,i)}=a_{(1,(p-(i-2))(mod \ p))}
%\end{equation} and $1 \leq i \leq p$

The cryptographic algorithm for encryption is given by\\
\textbf{Encryption Algorithm}\\
\textbf{Function EncrAlg(msg)}\\
Require msg to encrypt\\
  select $(p, r, r',d,q, P(p^r))$\\
\begin{equation*}
\left.\begin{aligned}
k & \leftarrow (p, r, r', d, q) \hspace{1.1cm} \ if \ r=1 \ and \ r' = 1 \\ 
k & \leftarrow (p, r, r', d, q, P(p^r)) \ if \ r \geq 1 \ and \  r'=0  \hspace{1cm} 
\end{aligned}\right\} Form \ private \ key
\end{equation*}
Transmit$(k)$ \hspace{7cm} Transmit the securely the private key\\
$M \leftarrow $ convert(msg) \hspace{5.9cm} Convert original msg\\
calculate $n=p^r(p+2)^{r'}$\\
$C\leftarrow (AM+de_{n})(mod \ q)$ \hspace{5cm} Encrypted msg is $C$\\
Return(Transmit(C))\\
\textbf{End Function}

In order to fulfill the objectives of the cryptography the encrypted message $C$ has to be decrypted uniquely.
\begin{theorem}
	If $C $ is the encrypted message which is transmitted with the encryption algorithm then the decrypted message $D \equiv (I_{n}+J_{n})A^T(C-de_{n})t(mod \ q)$ is uniquely determined and is equal to $M$, where $t$ is solution of $(n+1)x \equiv 1(mod \ q)$.
\end{theorem}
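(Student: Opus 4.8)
The plan is to start from the encryption relation $C \equiv (AM + de_n)\pmod q$ and show that applying the stated decryption map recovers $M$. First I would subtract $de_n$ to get $C - de_n \equiv AM \pmod q$, then pre-multiply by $A^T$ to obtain $A^T(C - de_n) \equiv A^TA\, M \pmod q$. Using Definition \ref{def:conference}, $A^TA = (n+1)I_n - J_n$, so the congruence becomes $A^T(C - de_n) \equiv \bigl((n+1)I_n - J_n\bigr)M \pmod q$. The goal is then to check that multiplying the left side by $(I_n + J_n)$ and by the scalar $t$ (where $(n+1)t \equiv 1 \pmod q$) yields $M$, i.e. that $t\,(I_n + J_n)\bigl((n+1)I_n - J_n\bigr) \equiv I_n \pmod q$.

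The heart of the argument is this matrix identity. I would expand $(I_n + J_n)\bigl((n+1)I_n - J_n\bigr) = (n+1)I_n - J_n + (n+1)J_n - J_n^2$. Since $J_n^2 = nJ_n$, the $J_n$ terms collapse: $-J_n + (n+1)J_n - nJ_n = 0\cdot J_n$, leaving exactly $(n+1)I_n$. Hence $(I_n + J_n)\bigl((n+1)I_n - J_n\bigr) = (n+1)I_n$, and multiplying by $t$ gives $t(n+1)I_n \equiv I_n \pmod q$. Therefore $D \equiv t(I_n+J_n)A^T(C - de_n) \equiv t(n+1)M \equiv M \pmod q$, which is the claimed equality $D = M$.

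For uniqueness I would note that the existence of $t$ requires $(n+1)x \equiv 1 \pmod q$ to be solvable, which holds precisely because $\gcd(n+1,q) = 1$ (an explicit hypothesis on the modular base in the encryption scheme), and when it exists the inverse $t$ is unique modulo $q$; since every entry of $M$ is reduced modulo $q$, the vector $D$ reduced modulo $q$ coincides with $M$ entrywise, so the decrypted message is determined uniquely. I would also remark that $A^T$ is available to the receiver without constructing $A$, via Remarks \ref{construct:p^r} and \ref{construct:pq}, so the decryption key $\tfrac{1}{n+1}(I_n+J_n)A^T$ is effectively computable once $n$ (hence $p, r, r'$) and $q$ are known.

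I do not anticipate a serious obstacle here; the only thing to be careful about is bookkeeping with the scalar $t$ versus the rational $\tfrac{1}{n+1}$ used earlier in the exposition — these agree modulo $q$ under $\gcd(n+1,q)=1$ — and making explicit that $J_n^2 = nJ_n$, which is the one computational fact driving the collapse of the $J_n$ terms. The rest is routine matrix algebra modulo $q$.
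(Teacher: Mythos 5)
Your proposal is correct and follows essentially the same route as the paper's own proof: subtract $de_{n}$, pre-multiply by $A^T$, use $A^TA=(n+1)I_{n}-J_{n}$ together with $J_{n}^2=nJ_{n}$ to collapse $(I_{n}+J_{n})\bigl((n+1)I_{n}-J_{n}\bigr)$ to $(n+1)I_{n}$, and cancel via $(n+1)t\equiv 1 \pmod q$. Your explicit remarks on the uniqueness of $t$ from $\gcd(n+1,q)=1$ and on the equivalence of $t$ with $\tfrac{1}{n+1}$ modulo $q$ are consistent with, and slightly more careful than, the paper's presentation.
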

\begin{proof}
	Since $gcd(n+1,q)=1$ so $(n+1)x \equiv 1(mod \ q)$ has unique solution $t$.
	As $C$ is an encrypted message with respect to the encryption algorithm \eqref{encrypt:C}. So\begin{align*}
	C &\equiv(AM+de_{n})(mod \ q)\\
	\Rightarrow C-de_{n} &\equiv AM (mod \ q)
	\end{align*} 	
Since,\begin{align*}
D & \equiv (I_{n}+J_{n})A^T(C-de_{n})t(mod \ q)\\
&\equiv(I_{n}+J_{n})A^TAMt(mod \ q) & Since \ A \ is \ a \ conference \ matrix\\
 &\equiv(I_{n}+J_{n})((n+1)I_{n}-J_{n})Mt(mod \ q)\\
&\equiv \{(n+1)I_{n}-J_{n}+(n+1)J_{n}-nJ_{n}\}Mt(mod \ q)\\
&\equiv(n+1)I_{n}Mt(mod \ q)\\
&\equiv M(n+1)t(mod \ q)\\
&\equiv M (mod \ q)
\end{align*}
	
	 So, $D=M$ i.e. message is uniquely decrypted.
	\end{proof}
	\textbf{Decryption Algorithm}
	
	 \textbf{Function DecrAlg($C$)}\\
	 Require received ciphertext $C$\\
	 received $(p, r, r',d,q, P(p^r))$\\
	 \begin{equation*}
	 \left.\begin{aligned}
	 k \leftarrow (p, r, r', q)  \hspace{1cm} \ \  if \ r=1 \ and  \ r' = 1 \\ 
	 k \leftarrow (p, r, r', q, P(p^r)) \  \ if \ r \geq 1 \ and \ r'=0
	 \end{aligned}\right\} set\ private \ key
	 \end{equation*}
	 calculate $n=p^r(p+2)^{r'}$\\
	 	 $M\leftarrow (k(C-de_{n}))(mod \ q)$ \hspace{3.7cm} Decrypt ciphertext\\
	 msg $\leftarrow$ convert($M$)\\
	 Return(msg)\\
	 \textbf{End Function}
	
%An encryption scheme must be robust against the attacks and doesn't compromise the security of information. %Strength of an encryption scheme is determined by the computational power needed to break it.\\

\subsection{Analysis of time complexity of algorithm}
In the above mentioned encryption scheme sender transmit  numbers $(p, r, r',d,q)$ and $P(n)$ (in either case) as a private key. To get the original message intended receiver has to use the transformation 
\begin{equation*}
M\equiv (I_{n}+J_{n})A^T(C-de_{n})t(mod \ q)
\end{equation*} where $I_{n}$ is an identity matrix of order $n=p^r(p+2)^{r'}$ and $J_{n}$ is a square matrix of order $n$ with all entries $1$.
It means intended receiver has to find out $A^T$ and $(n+1)^{-1} (mod \ q)$ i.e.$t$ only. $A^T$ can be obtained by using remark \eqref{construct:p^r} and \eqref{construct:pq}.
%Since $n$ is a prime such that $n \equiv 3(mod \ 4)$ so $A^T$ can be construct using Quadratic residues of $\mathbb{Z}_{n}$ and the first row of circulant matrix $A^T$  is defined as 
%$a_{1i}=\chi((n-i)mod \ n)$ where $i=0,1,2, \cdots, n-1$ and $\chi$ is extended quadratic character. There are $\frac{n-1}{2}$ quadratic residues in $\mathbb{Z}_{n}$. The time complexity of finding quadratic residues is  $\mathcal{O}(\frac{n-1}{2})$.
 Thus the time complexity to find $A^T$ is $\mathcal{O}(n)$.
 In this scheme there is no need to calculate the inverse of $A^TA$. Since $gcd(n+1, q)=1$ so, $t$ may be obtained  using Euclidean algorithm and its time complexity is $\mathcal{O}(log \ n)$. Thus we see that the time complexity of this algorithm depends on matrix multiplications of the matrices $(I_{n}+J_{n}), A^T$ and the column vector $C-de_{n}$ and is $\mathcal{O}(n^3)$.

\section{Security of the method}	
	
	In cryptography, the main aim is to protect the information about the key, plain text and cypher text from the intruder. But intruder always tries to attack a cipher or cryptographic system so that they can get a lead to break it fully or only partially. \\  The types of main attacks are follows:
	\begin{itemize}
		\item Brute force attack
		\item Known Plain text attack
		\item Ciphertext-only attack
			
	\end{itemize} 
 
\subsection{Cryptanalysis of Brute force attack}
In Brute force attack, the intruder tries all possible character combination to find the keys and checks which one of them returns the plaintext. \\
In order to break above discussed encryption scheme using brute force attack, intruder has to find the key matrix $A$ and $t$. It is difficult to find $A$ and $t$ unless $n, q$ and $P(n)$ in either case are known. Since the set of numbers  $(p, r, r',d,q)$ and $P(n)$ is a private key comprising a large prime $p$ and primality testing is NP. However there is an algorithm of primality testing for a given integer is in $P$ \cite{agrawal}.  In  either case to obtain the suitable primitive polynomial of $GF(p^r)$ increases the difficulty level.\\ So it will be  difficult to guess the size of $A$ exactly. Suppose any one could guess the order of key matrix $A$. Since matrix $A$ consists $(-1,1)$ only, so the size of the key space, $ K(A)$, is $|K(A)|=2^{n^2}$. Thus the computational complexity to find $A$ is $\mathcal{O}(2^{n^2})$.\\
 Its complexity increases exponentially. Thus it seems that the above discussed encryption scheme is robust against the Brute force attack.

\subsection{Cryptanalysis of known Plaintext attack}
	The known plaintext attack is the one where intruder has an access to the quantity of plaintext as well as its corresponding cipher text. In this type of attack the main goal is to guess the private key or to develop an algorithm so that they can decrypt any further message.
	In the above discussed encryption scheme,
	 we have
	 \begin{equation*} C\equiv (AM+de_{n})(mod \ q)
	 \end{equation*}
So, basially to find the encryption scheme they have to solve the $n-$ dimensional non-homogeneous system of linear equation which is very difficult. 	

\begin{proposition} \cite{koukouencrypt}
	All encryption scheme using Hadamard matrices (conference matrix) with circulant cores are secure against known-plain text attacks under the assumption that the adversary has knowledge of less than $n$ messages of length $n$ of the plain text and the corresponding cipher text. 
\end{proposition}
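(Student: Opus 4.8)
The plan is to reduce the attacker's task to an underdetermined linear system over $\mathbb{Z}_q$, and then to check that the structural constraints on the key do not rescue the attacker. Suppose the adversary holds $k<n$ pairs $(M^{(1)},C^{(1)}),\dots,(M^{(k)},C^{(k)})$, each obeying $C^{(i)}\equiv AM^{(i)}+de_n\pmod{q}$. Even granting the adversary the value of the constant $d$ (only $q$ values to try), recovering the key amounts to solving the matrix equation $A\,\overline{M}\equiv\overline{C}\pmod{q}$ for the unknown $A$, where $\overline{M}=[\,M^{(1)}\mid\cdots\mid M^{(k)}\,]\in\mathbb{Z}_q^{\,n\times k}$ and $\overline{C}=[\,C^{(1)}-de_n\mid\cdots\mid C^{(k)}-de_n\,]$ are known. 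Since $k<n$ we have $\operatorname{rank}(\overline{M})\le k<n$, so the columns of $\overline{M}$ span a proper submodule $V\subsetneq\mathbb{Z}_q^{\,n}$, and the system determines $A$ only on $V$. Writing $A_0$ for any particular solution, the full solution set is the coset $A_0+\{\,Z\in\mathbb{Z}_q^{\,n\times n}:Z\overline{M}\equiv 0\pmod{q}\,\}$; the submodule on the right is nonzero — of cardinality at least $q^{\,n(n-k)}\ge q^{\,n}$ when $q$ is prime — so the observed data are consistent with a huge family of matrices $A$. This is exactly the classical obstruction for the Hill cipher, where $n$ linearly independent known plaintext blocks are required to pin down the key; with fewer, a fresh plaintext block cannot be decrypted either, unless it happens to lie in $V$.

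The second, and genuinely delicate, step is that a legitimate key is not an arbitrary matrix over $\mathbb{Z}_q$ but a $(-1,1)$ conference matrix of order $n$ with circulant core; one must therefore verify that intersecting the coset above with this structured family still leaves more than one candidate (ideally, many). I would argue this via the parametrization recalled in Remarks \ref{construct:p^r} and \ref{construct:pq}: for $n=p^r$ the circulant conference matrix is built from the quadratic-residue character of $GF(p^r)$ together with a choice of generator, i.e.\ of the primitive polynomial $P(p^r)$, and for $n=p(p+2)$ from a difference set; in either case there are several genuinely distinct admissible first rows, and $k<n$ known blocks impose at most $k<n$ linear conditions on the length-$n$ vector encoding the first row, which is too few to cut the feasible set down to a single point. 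Combining this with the linear underdetermination above shows that $k<n$ plaintext/ciphertext pairs leave the key undetermined, which is the assertion of the proposition.

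The step I expect to be the main obstacle is precisely this last one: handling the three constraints simultaneously — the conference identity $A^TA=(n+1)I_n-J_n$, the circulant form, and the $k<n$ observed equations — without over-counting "solutions" that are not actually valid keys, and turning the rough count "at least two survive" into a quantitatively honest "super-polynomially many survive". Working at the level of first rows (a single length-$n$ vector, constrained by a structured feasibility condition and by at most $k<n$ affine equations) is, I believe, the cleanest route, but making the counting rigorous over a possibly composite modulus $q$ is where most of the care is needed.
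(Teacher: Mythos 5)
First, a point of reference: the paper does not actually prove this proposition. It is quoted from \cite{koukouencrypt}, and the only supporting text here is the one-line remark that the adversary must solve an $n$-dimensional non-homogeneous linear system. Your first paragraph is a correct and considerably more precise version of exactly that argument: with $k<n$ pairs the columns of $\overline{M}$ span a proper submodule of $\mathbb{Z}_q^{\,n}$, the solution set for $A$ is a large coset, and this is the classical Hill-cipher obstruction. On that portion you match the intended argument and improve on its rigor.

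The genuine gap is in your second step, and it is one you partly flagged yourself: the intersection of that coset with the set of \emph{admissible} keys cannot be handled by dimension counting. The admissible keys are not a free length-$n$ vector of first rows subject to $k<n$ affine conditions; by Remarks \ref{construct:p^r} and \ref{construct:pq} the circulant conference matrices of a given order $n$ form a small finite set, essentially parametrized by a handful of discrete choices (the quadratic character, a primitive polynomial of $GF(p^r)$, cyclic shifts, a global sign) --- at most polynomially many candidates in $n$, not $q^{\Theta(n)}$. Against a set that small, even a single generic known plaintext/ciphertext pair is typically enough to single out the key once the adversary knows $n$ and the construction method, so the conclusion ``super-polynomially many structured keys survive'' is not obtainable by your argument and is very likely false for a structure-aware adversary. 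The proposition, as stated in the cited source and as used in this paper, must be read as a statement about an adversary who treats $A$ as an arbitrary unknown matrix over $\mathbb{Z}_q$; the scheme's resistance to an adversary who exploits the Paley/difference-set structure rests instead on the secrecy of $(p,r,r',d,q)$ and $P(n)$, which is a different (and weaker) claim than the one you set out to prove. Your first paragraph therefore stands as the whole of the defensible content, and the second should either be dropped or replaced by the explicit hypothesis that the key structure is unknown to the adversary.
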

%\subsection{Cryptanalysis of chosen plaintext attack}
%The chosen plaintext attack is the one where the intruder choose arbitrary plaintext data and receive a ciphertext. Actually in this type of attack they try to create an algorithm to decrypt the message. 
%As, 
%\begin{equation*}
%C\equiv (AM+de_{n})(mod \ q)
%\end{equation*}, so to find the encryption key they have to solve the non-homogenous system of linear equation with $n$ variables.
%\begin{proposition} \cite{koukouencrypt}
%	All encryption scheme using Hadamard matrices with circulant cores are secure against chosen-plaintext attacks, as the scheme are secure against known-plaintext attacks. 
%\end{proposition}
%In known-plaintext attack, attacker has atleast one known sample of both plaintext and ciphertext, whereas in chosen-plaintext attack, attacker specify their own plaintext and encrypt that.\\

\subsection{Cryptanalysis of ciphertext- only attacks} 
The ciphertext-only attack is the one where intruder has access to the number of encrypted message. They have no idea about exact plain text and private key is. In this type of attack the main goal is to deduce the private key or plain text. Mainly they focus on finding the private key so that they can use that to decrypt the further encrypted message.\\
 So, to design the encryption algorithm it is particularly important to protect them against the cipher text only attack. As we can say this attack is the starting point of cryptanalyst.
 
When we use conference matrix in encryption scheme two same letters of the plain text  $M$ corresponds to different values of the encrypted text $C$. So, an attacker cannot observe the plain text or any information regarding the private key after seeing the encrypted message. 
\begin{proposition}\cite{koukouencrypt}
	All encryption scheme using Hadamard matrices are secure against ciphertext-only attack.
\end{proposition}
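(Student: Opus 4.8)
The plan is to follow \cite{koukouencrypt} and reduce this claim to the two analyses already carried out in this section, exploiting the fact that a ciphertext-only adversary is strictly weaker than a known-plaintext adversary. First I would fix the model: the adversary sees a family of ciphertext blocks $C_1, C_2, \dots$, each of the form $C_i \equiv A M_i + d e_n \pmod q$ for one fixed secret tuple --- the conference matrix $A$ of order $n$, the masking constant $d$, the modular base $q$ with $\gcd(n+1,q)=1$, and whatever extra data (e.g. $P(p^r)$) pins $A$ down --- but has no plaintext $M_i$ at all. Since this is precisely the known-plaintext setting with zero revealed pairs, and $0 < n$, the preceding proposition on known-plaintext security already applies and rules out recovery of the private key by solving the encryption relation; I would state this reduction first.

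Next I would make the counting explicit to exclude exhaustive search. The order $n = p^r(p+2)^{r'}$ is itself hidden in the private key and, as noted in the brute-force subsection, hard to guess because it is assembled from a large prime $p$; but even granting the adversary the correct $n$, the number of $(-1,1)$ conference matrices of order $n$ times the number of admissible pairs $(d,q)$ is exponential in $n$, bounded above by the $\mathcal{O}(2^{n^2})$ estimate established there. Moreover no candidate tuple can be pruned from ciphertexts alone, because for every key every block $C_i$ is the legitimate encryption of some plaintext; so the key space cannot be shrunk below this exponential size.

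The step I expect to be the real obstacle is ruling out the classical ciphertext-only shortcut --- single-symbol frequency analysis --- which is what actually breaks substitution and Hill-type ciphers in practice. Here I would invoke the mixing effect of the conference matrix: within a block the $j$-th ciphertext coordinate equals $\sum_k A_{jk} M_k + d$, so one plaintext symbol occurring in two positions $k_1 \ne k_2$ enters through two different rows of $A$ and is further entangled with all the other symbols of the same block, whence equal plaintext letters essentially never produce equal ciphertext letters and the ciphertext symbol distribution is flattened. Thus the language's frequency profile leaks nothing usable, and any ciphertext-only attack degenerates into the infeasible brute-force search of the previous paragraph. Finally I would record the caveat that ``secure'' here means that the best ciphertext-only strategy is no better than exhaustive key search, and phrase the conclusion in that sense, citing \cite{koukouencrypt} for the general statement covering all Hadamard-type constructions.
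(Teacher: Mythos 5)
The paper does not actually prove this proposition: it is imported verbatim from \cite{koukouencrypt}, and the only supporting text in the paper is the single informal observation that two occurrences of the same plaintext letter inside a block are mapped to different ciphertext values, so that the cipher is not a per-symbol substitution. Your proposal therefore goes well beyond the source. Two of your three ingredients are consistent with the surrounding discussion: the reduction of a ciphertext-only adversary to a known-plaintext adversary with zero revealed pairs (so the preceding proposition applies vacuously), and the appeal to the $\mathcal{O}(2^{n^2})$ key-space bound from the brute-force subsection. Your third paragraph is essentially a fleshed-out version of the paper's one-sentence mixing remark.

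That third step, which you yourself flag as the real obstacle, is where the argument genuinely fails, and elaborating the mixing heuristic does not repair it. The map $M \mapsto AM + de_n \pmod q$ is deterministic and, when $A$ is invertible modulo $q$, a bijection on blocks; hence the distribution of ciphertext \emph{blocks} is exactly the plaintext block distribution pushed forward by a fixed bijection, repeated plaintext blocks yield identical ciphertext blocks, and block-level frequency analysis remains available to a ciphertext-only adversary. The claim that ``equal plaintext letters essentially never produce equal ciphertext letters'' addresses only single-symbol frequency analysis within one block, and even there it is a heuristic rather than a proof, since nothing prevents $\sum_k A_{j_1 k}M_k \equiv \sum_k A_{j_2 k}M_k \pmod q$ for distinct rows $j_1, j_2$. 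Likewise the zero-pair reduction rules out only the specific linear-algebraic key-recovery attack treated in the known-plaintext proposition, not arbitrary ciphertext-only strategies. Your closing caveat --- that ``secure'' must mean ``no better than exhaustive search against the attacks considered'' --- is therefore not a caveat but the entire content of the claim: in the unqualified form stated, the proposition is not a provable mathematical statement, and neither your sketch nor the paper establishes it.
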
 
%\subsection{Cryptanalysis of chosen-ciphertext attacks}
%The chosen ciphertext attack is the one where they can analyse any arbitrary chosen ciphertext with their corresponding plaintext. The goal of this attack is to get a private key or as much as possible to collect the information about the attacked encryption scheme.\\
%In the above disscussed encryption scheme it is not possible to guess the private key, as we have used conference matrix for encryption. Since any value of the cipher text is a combination of $n$ values of a plain text and one row of a key matrix $A_{n}$. So, if there are same values in the cipher text then it doesn't represent that plain text contains also same letters.
%\begin{proposition}\cite{koukouencrypt}
%	All encryption scheme using conference matrix (Hadamard matrices) are secure against chosen-cipher text and cipher text-only attack as well as known-plain text and chosen plain text attack.
%\end{proposition}
\section{Example}
Consider a message \textbf{ HELLO} which has to be transmit using the encryption scheme discussed above.\\ 
Message in ASCII code is
\begin{equation*}
 M= \left[ \begin{array}{ccccc}
 72 & 69 & 76 & 76 & 79
 \end{array}
 \right]^T
\end{equation*}\\
Suppose $p=19, r=1, r'=0$ so size of the conference matrix is $n=19$. \\
Since $M$ contains $5$ letters and $n=19$ so to make it equal "space" is added  $14$ times in $M$. The ASCII code of "space" is $32$. Thus 
\begin{equation*}
M= \left[
\begin{array}{ccccccccccccccccccc}
72 & 69  & 76 & 76 & 79 & 32 & 32 &  32 &  32 & 32 & 32 & 32 & 32 & 32 &  32 & 32 & 32 &  32 & 32\\
\end{array}
\right]^T
\end{equation*}
 So for modular base we can take $q= 81$ as $gcd(20,81)=1$
So, $n=19$, and suppose $d=2$. Thus encrypted message 
$C= (AM+de_{n})(mod \ 81)$.\\
 where the first row of the circulant conference matrix $A$ of order $19$   obtained by using construction defined in subsection \eqref{construct:paley} is  given by 
\begin{equation*}
A=\left[
\begin{array}{ccccccccccccccccccc} 
1& 1 &-1 & -1 & 1 & 1 & \ 1 & \ 1 & - 1 & \ 1 & - 1 & \ 1 & -1 & -1 & -1 & - 1 & \ 1 & \ 1 & -1\\
\end{array}
\right]
\end{equation*}
Therefore
\begin{equation*}
C= \left[
\begin{array}{ccccccccccccccccccc}
70 &	65 &	78 &	77 &	78 &	79 &	4 &	78 &	64 &	58 &	71 &	65 &	3 &	64 &	77 &	71 &	4 &	11 &	3\\
\end{array}
\right]^T
\end{equation*}

Generally $0-31$ and $127$ are not printable and it is indicated with "NA" \cite{online}. But here for our convenience we use 
\begin{align*}
0 & \rightarrow 0*\\
1 & \rightarrow 1 *\\
2 & \rightarrow 2*
\end{align*}
and so on. In this case intended receiver has to understand that when $n*$ is included in encrypted message its numeric value will be $n$, where $n$ is non printable character.
After converting the ASCII code of encrypted message  into its corresponding printable character
 \\ C= \textbf{F \ A \ N \ M \ N \ O \ 4* \ N \ @ \ : \ G \ A \ 3* \ @ \ M \ G \ 4* \ 11* \ 3*}.\\
Sender need to send the private key $(19, 1, 0, 2, 81)$ along with the encrypted message $C$.
Now intended receiver get plain text using transformation 
\begin{equation*}
M=(I_{n}+J_{n})A^T(C-de_{n})t (mod \ 81)
\end{equation*}
Matrices $I_{n}$ and $J_{n}$ are well known. $A^T$ and $t$ may be obtained with the help of remark \eqref{construct:p^r} and  Euclidean algorithm respectively. Thus receiver decrypt the message and get the column vector 
\begin{equation*}
M= \left[
\begin{array}{ccccccccccccccccccc}
	72 & 69  & 76 & 76 & 79 & 32 & 32 &  32 &  32 & 32 & 32 & 32 & 32 & 32 &  32 & 32 & 32 &  32 & 32\\
\end{array}
\right]^T
\end{equation*}

\section{Conclusion}
In this article we have developed an encryption scheme using conference matrix. The sender shares only the numbers $(p, r, r', d, q)$ and $P(n)$ (in either case) as a private key to the intended receiver. Private key comprising of limited numbers makes easy transmission.
The theoretical development of decryption key  makes easy to decrypt the massage for intended receiver.
 However for intruder it is very difficult to guess the private key as finding prime $p$ is an NP problem. Also obtaining the suitable primitive polynomial increases the difficulty level in case of involvement of prime power. %if $n=p^r$ then additional difficulty level will increase as primitive polynomial is used to construct $A^T$ it means intruder has to find out the primitive polynomial as well.
  It has been observed that the encryption scheme is robust against the cipher attack.

\end{document}